\newcommand{\FF}{\mathbb{F}}
\newcommand{\Fq}{\mathbb{F}_q}
\newcommand{\Fqm}{\mathbb{F}_{q^{m}}}
\newcommand{\C}{\mathbb{C}}
\newtheorem{theorem}{Theorem}
\newtheorem{lemma}[theorem]{Lemma}
\newtheorem{definition}[theorem]{Definition}
\newtheorem{remark}[theorem]{Remark} %TODO: disable counter.
\def\mkfancyprefix#1#2{%
\expandafter\def\csname fancyref#1labelprefix\endcsname{#1}%
% plain lowercase
\begingroup\def\x{\endgroup\frefformat{plain}}%
    \expandafter\x\csname fancyref#1labelprefix\endcsname
    {\MakeLowercase{#2}\fancyrefdefaultspacing##1}%
% plain uppercase
\begingroup\def\x{\endgroup\Frefformat{plain}}%
    \expandafter\x\csname fancyref#1labelprefix\endcsname
    {#2\fancyrefdefaultspacing##1}%
% vario lowercase
\begingroup\def\x{\endgroup\frefformat{vario}}%
    \expandafter\x\csname fancyref#1labelprefix\endcsname
    {\MakeLowercase{#2}\fancyrefdefaultspacing##1##3}%
% vario uppercase
\begingroup\def\x{\endgroup\Frefformat{vario}}%
    \expandafter\x\csname fancyref#1labelprefix\endcsname
    {#2\fancyrefdefaultspacing##1##3}%
}
\fancyrefchangeprefix{\fancyrefeqlabelprefix}{eqn}
\newcommand{\cref}[1]{\Fref{#1}}
\newcommand{\removelatexerror}{\let\@latex@error\@gobble}
\newcommand{\printalgoIEEE}[1]
{{\centering
\scalebox{0.97}{
\removelatexerror
\begin{tabular}{p{\columnwidth}}
\begin{algorithm}[H]
% \begin{small}
 #1
% \end{small}
\end{algorithm}
\end{tabular}
}
}
}
\renewcommand{\vec}[1]{\ensuremath{\mathbf{#1}}}
\newcommand{\N}{\mathbb{N}}
\newcommand{\U}{\mathcal{U}}
\newcommand{\A}{\mathcal{A}}
\newcommand{\CGab}{\mathcal{C}_\mathrm{G}}
\newcommand{\autom}{\theta}
\newcommand{\K}{K}
\renewcommand{\L}{L}
\newcommand{\Lset}{\L[x;\theta]}
\newcommand{\Gal}[1]{\mathrm{Gal}\left(#1\right)}
\newcommand{\GalLK}{\Gal{\L/\K}}
\newcommand{\ev}{\mathrm{ev}}
\newcommand{\charpoly}{\mathrm{char}}
\newcommand{\id}{\mathrm{id}}
\newcommand{\Bb}{\mathcal{B}}
\newcommand{\MSP}[1]{\mathcal{A}_{#1}}
\newcommand{\BigO}[1]{O(#1)}
\newcommand{\rhat}{\hat{r}}
\newcommand{\terr}{\tau}
\newcommand{\wtR}{\mathrm{wt}_\mathrm{R}}
\renewcommand{\r}{\vec{r}}
\renewcommand{\c}{\vec{c}}
\newcommand{\e}{\vec{e}}
\newcommand{\LH}[1]{\langle #1 \rangle}
\newcommand{\quo}{\chi}
\newcommand{\rem}{\varrho}
\newcommand{\rk}{\mathrm{rank}}
\newcommand{\smallsum}{{\textstyle\sum\nolimits}}
\newcommand{\Gpoly}{\MSP{\LH{g_1,\dots,g_n}}}
\begin{document}
\title{An Alternative Decoding Method for Gabidulin Codes in Characteristic Zero}

\author{\IEEEauthorblockN{Sven Müelich\IEEEauthorrefmark{1},
Sven Puchinger\IEEEauthorrefmark{1},
David Mödinger\IEEEauthorrefmark{2},
Martin Bossert\IEEEauthorrefmark{1}}
\IEEEauthorblockA{\IEEEauthorrefmark{1}
              Ulm University, 
              Institute of Communications Engineering, 
              89081 Ulm, Germany\\
              Email: \{sven.mueelich, sven.puchinger, martin.bossert\}@uni-ulm.de}
\IEEEauthorblockA{\IEEEauthorrefmark{2}
              Ulm University, 
              Institute of Distributed Systems, 
              89081 Ulm, Germany\\
              Email: david.moedinger@uni-ulm.de}
}

\maketitle

\begin{abstract}
Gabidulin codes, originally defined over finite fields, are an important class of rank metric codes with various applications.
Recently, their definition was generalized to certain fields of characteristic zero and a Welch--Berlekamp like algorithm with complexity $O(n^3)$ was given.
We propose a new application of Gabidulin codes over infinite fields: low-rank matrix recovery.
Also, an alternative decoding approach is presented based on a Gao type key equation, reducing the complexity to at least $O(n^2)$.
This method immediately connects the decoding problem to well-studied problems, which have been investigated in terms of coefficient growth and numerical stability. {\let\thefootnote\relax\footnote{{This work has been supported by DFG, Germany, under grant BO 867/32-1.}}}
\end{abstract}

\begin{IEEEkeywords}
Gabidulin Codes, Characteristic Zero, Rank Metric, Decoding, Matrix Recovery
\end{IEEEkeywords}

%-----------------------------------------------------------------------
% Motivation
%-----------------------------------------------------------------------
\section{Motivation}
Finding a matrix of minimal rank is a problem which occurs in different scenarios. For example in random linear network coding~\cite{koetter2008}, an error can be described by a matrix of minimal rank. 
Therefore, codes whose metric is based on the rank of matrices can be beneficial. 
The most prominent example of rank metric codes are Gabidulin codes, introduced by Delsarte \cite{delsarte1978bilinear}, Gabidulin \cite{gabidulin1985theory}, and Roth \cite{roth1991maximum}.
Given a received word $\mathbf{R} = \mathbf{C} + \mathbf{E}$, the calculation of the error matrix $\mathbf{E}$ of minimum rank can be described by the weight-minimization problem
\begin{align}
 \min  \rk(\mathbf{E'}) ~\text{subject to } \mathbf{HE'} = \mathbf{HE},
 \label{EqMinimization}
\end{align}
where $\mathbf{H}$ is a parity check matrix.
This minimization problem is equivalent to the problem of low-rank matrix recovery (LRMR) \cite{candes2009,gross2011recovering}, which is the matrix-analogue to compressed sensing \cite{candes2006,donoho2006}. 
This problem aims to recover an unknown matrix $\mathbf{E} \in \C^{n \times n}$ of low rank, and can be solved by finding a solution for the under-determined linear system of equations $\mathbf{He} = \mathbf{s}$, where $\mathbf{H}\in \C^{m \times n^2}$ is the sensing matrix, $\mathbf{e} \in \C^{{n^2}\times 1}$ is the vector representation of the matrix $\mathbf{E}$, and $\mathbf{s}\in \C^{m \times 1}$ is the measurement when applying the sensing matrix $\mathbf{H}$ to $\mathbf{E}$ ($m<n^2$).
Applications of LRMR can be found e.g., in the fields of image processing or collaborative filtering. 
Since decoding of rank metric codes and LRMR is the same mathematical problem (cf. Equation~(\ref{EqMinimization})), the application of Gabidulin codes in characteristic zero might be promising to the LRMR problem.
If we replace the rank metric by the Hamming metric, Equation~(\ref{EqMinimization}) describes both a Hamming-metric decoder and the compressed sensing problem.
An exchange of concepts between these two areas was successfully investigated in the recent years \cite{zorlein2015}.
Another important application of Gabidulin codes in characteristic zero is space-time coding.

Commonly, Gabidulin codes are defined over finite fields as evaluation codes of linearized polynomials and can be considered as rank metric equivalents of Reed-Solomon codes. 
In \cite{mohamed2015deterministic}, Reed-Solomon codes over the complex field were investigated for applications in compressed sensing.
LRMR and space-time codes indicate that there is a need for Gabidulin codes defined over fields of characteristic zero, possibly dense in $\mathbb{C}$.
In \cite{augot2013rank} and \cite{robert2015phd}, Gabidulin codes in characteristic zero were introduced.
In contrast to the finite field case, $\autom$-polynomials are used instead of linearized polynomials. 
A Welch-Berlekamp-like decoding algorithm \cite{loidreau2006} was transformed from the finite field case to the characteristic zero case, which allows decoding in cubic time. 
In this work, we consider an alternative method for decoding characteristic zero Gabidulin codes.

The rest of the paper is structured as follows:
Section~\ref{sec:Gabidulin} outlines Gabidulin codes and related concepts in characteristic zero.
In Section~\ref{sec:Decoding} we propose a new decoding approach.
We explain how the decoding problem can be solved by using shift register synthesis to find solutions of a Gao-like key equation.
We also discuss issues of coefficient growth and numerical problems which emerge when using infinite fields.
Finally, Section~\ref{sec:Conclusion} concludes the paper.

%-----------------------------------------------------------------------
% Gabidulin Codes Over Infinite Fields
%-----------------------------------------------------------------------
\section{Gabidulin Codes Over Infinite Fields}
\label{sec:Gabidulin}

This section first summarizes properties of $\autom$-polynomials, which are used to define Gabidulin codes in characteristic zero. 
Then we recall different definitions of rank metric and the definition of Gabidulin codes.

\subsection{$\autom$-polynomials}

Gabidulin codes over finite fields are usually defined using \emph{linearized polynomials} \cite{ore1933special}.
$\autom$-polynomials can be seen as a natural generalization of linearized polynomials for arbitrary fields.
Let $K \subseteq L$ be fields and $L/K$ be a Galois extension.
The \emph{Galois group} of $L/K$ is given by
\begin{align*}
\GalLK = \left\{ \autom : \L \to \L \;\text{automorphism} : \autom(k) = k \; \forall k \in \K \right\}.
\end{align*}

\begin{lemma}{\cite{ore1933theory}}\label{lem:theta_poly_ring}
Let $\autom \in \GalLK$. The set
\begin{align*}
\Lset = \left\{a = \smallsum_{i=0}^{d_a} a_i x^i : a_i \in \L, \, d_a \in \N, \, a_{d_a} \neq 0 \right\}
\end{align*}
with multiplication rule $x \cdot \alpha = \autom(\alpha) \cdot x$ for all $\alpha \in \L$, extended to polynomials inductively, and ordinary addition is a ring.
\end{lemma}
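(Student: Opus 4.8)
The plan is to verify the ring axioms directly, noting first that the only nontrivial content lies in the multiplication. Since addition is the ordinary coefficientwise addition of polynomials over $\L$, the pair $(\Lset,+)$ is at once an abelian group (the free $\L$-module on the monomials $x^i$), so I would dispatch this immediately. It then remains to pin down the multiplication, check that it is well defined, verify the two distributive laws and associativity, and exhibit a unit $x^0 = 1$.

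First I would extend the generating rule $x\cdot\alpha = \autom(\alpha)\cdot x$ to arbitrary powers of $x$. A straightforward induction on $i$ gives $x^i\cdot\alpha = \autom^i(\alpha)\cdot x^i$ for every $\alpha\in\L$ and $i\in\N$, using at each step that $x$ moves past a scalar by applying $\autom$ once more. From this the product of two monomials must be
\[
(a_i x^i)\cdot(b_j x^j) = a_i\,\autom^i(b_j)\,x^{i+j},
\]
and forcing distributivity then dictates the only possible definition of the full product,
\[
a\cdot b = \smallsum_k\Big(\smallsum_{i+j=k} a_i\,\autom^i(b_j)\Big)x^k.
\]
Because each inner sum is finite this is well defined, and left and right distributivity follow from this bilinear formula together with the additivity of $\autom$ (hence of each $\autom^i$), so these cost little.

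The main obstacle is associativity, $(a\cdot b)\cdot c = a\cdot(b\cdot c)$. By distributivity and $\L$-linearity it suffices to check it on monomials $a x^i$, $b x^j$, $c x^k$. Expanding the left side gives $a\,\autom^i(b)\,\autom^{i+j}(c)\,x^{i+j+k}$, while the right side gives $a\,\autom^i\!\big(b\,\autom^j(c)\big)\,x^{i+j+k}$. These agree precisely because $\autom$ is a field automorphism: multiplicativity yields $\autom^i(b\,\autom^j(c)) = \autom^i(b)\,\autom^i(\autom^j(c))$, and the homomorphism property composes as $\autom^i\circ\autom^j = \autom^{i+j}$, matching the two expressions. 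The step where I expect to spend the most care is exactly this bookkeeping of exponents on $\autom$, since it is the multiplicativity and composition law of the automorphism — not merely its additivity — that makes the twisted multiplication associative. Finally, $1 = x^0$ acts as a two-sided identity because $\autom^0 = \id$, which completes the verification that $\Lset$ is a ring.
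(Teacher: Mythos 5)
Your verification is correct, and it is worth noting that the paper itself gives no proof of this lemma at all: it is stated as a citation to Ore's 1933 work, so there is nothing in the text to compare against except the classical argument, which is exactly what you reconstruct. Your decomposition is the standard one and you correctly locate the crux: the additive structure and distributivity are free, the twisted convolution formula $a\cdot b = \smallsum_k\big(\smallsum_{i+j=k} a_i\,\autom^i(b_j)\big)x^k$ is forced by the generating rule plus bilinearity, and associativity hinges on $\autom$ being a ring homomorphism (multiplicativity gives $\autom^i(b\,\autom^j(c)) = \autom^i(b)\,\autom^{i+j}(c)$) rather than merely an additive map. One small point you could make explicit: since $\L$ is a field and $\autom$ is injective, the leading coefficient of a product of nonzero polynomials is $a_{d_a}\autom^{d_a}(b_{d_b})\neq 0$, so the product of nonzero elements is nonzero and has degree $d_a+d_b$; this shows $\Lset$ is in fact an integral domain, a property the paper silently relies on later (in the proof of Theorem~\ref{thm:key_equation_and_SRP} it concludes $\omega=\lambda f$ from $\MSP{\LH{\lambda(e_1),\dots,\lambda(e_n)}}(\omega-\lambda f)=0$ precisely by invoking that $\Lset$ is an integral domain). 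Including that one extra line would make your write-up strictly more useful than the bare citation in the paper.
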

We call the polynomial ring of \cref{lem:theta_poly_ring} a $\autom$-polynomial ring.
The degree of $a \in \Lset$ is given by $\deg a = d_a$ and
$a$ is called \emph{monic} if $a_{d_a} = 1$.

\begin{remark}
We state the following properties of $\Lset$.
\begin{itemize}
\item $(\Lset,+,\cdot)$ is non-commutative in general.
\item $\autom$-polynomials are a special case of skew polynomials~\cite{ore1933theory} with derivation $\delta=0$.
\item For $\K=\Fq$, $\L=\Fqm$ and the Frobenius automorphism $\autom = \cdot^q$, $\Lset$ is isomorphic to 
a linearized polynomial ring. Note that $\cdot^q \in \Gal{\Fqm/\Fq}$.
\end{itemize}
\end{remark}

Is was already proven in \cite{ore1933special} that $\Lset$ is a left- and right- Euclidean domain. E.g., the following division lemma is true.
\begin{lemma}\label{lem:division}\cite{ore1933special}
For $a\in \Lset$, $b \in \Lset^{*}$, $\exists$~unique $\quo,\rem\in\Lset$: $a = \quo \cdot b + \rem$ (right division), where $\deg \rem  < \deg b$.
\end{lemma}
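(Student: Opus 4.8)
The plan is to mimic the classical Euclidean division argument, the only new ingredient being to track how the automorphism $\autom$ acts when a monomial is pushed past a coefficient. The basic computation I would establish first is the commutation rule $x^i\cdot\alpha=\autom^i(\alpha)\,x^i$ for all $\alpha\in\L$, which follows by iterating $x\cdot\alpha=\autom(\alpha)\cdot x$; from it one reads off that for $f,g\in\Lset^{*}$ the coefficient of $x^{\deg f+\deg g}$ in $f\cdot g$ equals $f_{\deg f}\,\autom^{\deg f}(g_{\deg g})$, where $f_{\deg f}$ denotes the leading coefficient of $f$. Since $\autom$ is an automorphism it is injective, so $\autom^{\deg f}(g_{\deg g})\neq 0$, and because $\L$ is a field this product is nonzero. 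Hence $\deg(f\cdot g)=\deg f+\deg g$, and in particular $\Lset$ has no zero divisors. This degree-additivity is the structural fact everything else rests on.

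For existence I would argue by induction on $\deg a$. If $\deg a<\deg b$, set $\quo=0$ and $\rem=a$. Otherwise write $d_a=\deg a$, $d_b=\deg b$ and consider the single correction monomial
\begin{align*}
t=a_{d_a}\,\autom^{d_a-d_b}(b_{d_b})^{-1}\,x^{d_a-d_b},
\end{align*}
which is well defined because $b_{d_b}\neq 0$ forces $\autom^{d_a-d_b}(b_{d_b})$ to be an invertible element of $\L$. Using the commutation rule, the leading coefficient of $t\cdot b$ is exactly $a_{d_a}$, so $a-t\cdot b$ has degree strictly smaller than $d_a$. Applying the induction hypothesis to $a-t\cdot b$ yields $\quo',\rem'$ with $a-t\cdot b=\quo'\cdot b+\rem'$ and $\deg\rem'<\deg b$, whence $\quo=t+\quo'$ and $\rem=\rem'$ solve the problem for $a$. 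The recursion terminates since the degree strictly decreases at each step.

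For uniqueness, suppose $a=\quo_1\cdot b+\rem_1=\quo_2\cdot b+\rem_2$ with both remainders of degree $<\deg b$. Rearranging gives $(\quo_1-\quo_2)\cdot b=\rem_2-\rem_1$. The right-hand side has degree $<\deg b$, whereas if $\quo_1\neq\quo_2$ the left-hand side has degree $\deg(\quo_1-\quo_2)+\deg b\geq\deg b$ by the degree-additivity established above, a contradiction. Hence $\quo_1=\quo_2$, and then $\rem_1=\rem_2$.

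I expect the only genuine subtlety to be bookkeeping in the non-commutative setting: one must place the factor $\autom^{d_a-d_b}(b_{d_b})^{-1}$ on the correct side of the correction monomial so that it really cancels the leading term, and the uniqueness step must invoke degree-additivity rather than naive left cancellation, since $\Lset$ is non-commutative. Both points reduce to the single fact that $\autom$, being an automorphism of $\L$, is injective and sends nonzero elements to nonzero elements; once that is in hand, the argument is a routine adaptation of the commutative Euclidean division.
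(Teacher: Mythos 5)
Your proof is correct, and it is the standard Euclidean division argument for skew polynomial rings (commutation rule $x^i\cdot\alpha=\autom^i(\alpha)x^i$, degree additivity via injectivity of $\autom$, induction on $\deg a$ with a single leading-term correction, and uniqueness by comparing degrees). The paper itself gives no proof of this lemma — it only cites Ore — and your argument is essentially the classical one from that reference, so there is nothing to reconcile.
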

Related to division, we can define the (right) modulo congruence relation for $a,b,c \in \Lset$:
\begin{align*}
a \equiv b \mod c \quad :\Leftrightarrow \quad \exists d \in \Lset : a = b + d \cdot c.
\end{align*}

We can define an evaluation map\footnote{There are several definitions of evaluation maps for $\theta$-polynomials, cf.~\cite{boucher2014linear} for the general skew polynomial case.} on $\Lset$ as
\begin{align}
\label{eqn:ev}
\ev_a = a(\cdot) : \L \to \L, \quad \alpha \mapsto \smallsum_{i=0}^{d_a} a_i \autom^i(\alpha),
\end{align}
where $\autom^i(\cdot) = \underset{i \text{ times}}{\underbrace{\autom(\autom(\dots\autom(}}\cdot)\dots))$.
From $\theta \in \GalLK$ it follows that $\autom : L \to L$ is a $\K$-linear map.
Thus, also $\ev_a$ is a linear map and the root space of a $\autom$-polynomial $a$,
\begin{align*}
\ker(a) = \{\alpha \in \L : a(\alpha)=0 \},
\end{align*}
is a linear subspace of $\L$.
The evaluation map of the multiplication of two $\autom$-polynomials $a,b$ equals the composition of the evaluation maps of $a,b$ respectively, i.e. $\ev_{a \cdot b} = \ev_a \circ \ev_b$.
Since $\autom$ is a linear map, it has well-defined eigenvalues which are the roots of its characteristic polynomial
\begin{align*}
\charpoly_\autom(x) = \det(x \cdot \id_\L - \autom).
\end{align*}
The eigenvalues and characteristic polynomial are the same as of any matrix representation of $\autom$ in a basis of $\L$ over $\K$.
We say that a characteristic polynomial is square-free if all its roots have multiplicity one.
If $\charpoly_\autom$ is square-free, $\autom$ has distinct eigenvalues and any of its matrix representations is diagonalizable.
Using these properties, we can state the following theorem.

\begin{lemma}{\cite[Theorem 6]{robert2015phd}}\label{lem:dimker_deg_bound}
If $\charpoly_\autom$ is square-free, then
\begin{align*}
\dim_\K(\ker(a)) \leq \deg(a) \quad \forall a \in \Lset \setminus \{0\}
\end{align*}
\end{lemma}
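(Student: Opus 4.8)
The plan is to induct on $\deg a$, using that $\Lset$ is right-Euclidean (\cref{lem:division}) to strip off one degree-one right factor per step, and the composition rule $\ev_{a\cdot b}=\ev_a\circ\ev_b$ to control how the kernel grows. The base case $\deg a=0$ is immediate, since $\ev_a$ is then multiplication by a nonzero scalar of $\L$ and hence injective, so $\dim_\K\ker(a)=0$.

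For the inductive step with $\deg a=d\ge 1$, I would first dispose of the trivial case $\ker(a)=\{0\}$, and otherwise pick a nonzero root $\alpha\in\ker(a)$. Setting $c=\autom(\alpha)/\alpha\in\L$ and $b=x-c\in\Lset$, one checks $\ev_b(\alpha)=\autom(\alpha)-c\alpha=0$; right division $a=q\cdot b+r$ together with evaluation at $\alpha$ (using the composition rule) forces the constant remainder $r$ to vanish, so $a=q\cdot b$ with $\deg q=d-1$. Writing $\ev_a=\ev_q\circ\ev_b$ and invoking the elementary identity $\dim_\K\ker(\ev_q\circ\ev_b)=\dim_\K\ker(\ev_b)+\dim_\K(\mathrm{im}\,\ev_b\cap\ker\ev_q)$, I can bound
\[
\dim_\K\ker(a)\;\le\;\dim_\K\ker(\ev_b)+\dim_\K\ker(q)\;\le\;\dim_\K\ker(\ev_b)+(d-1),
\]
where the last inequality is the induction hypothesis. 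Everything then reduces to showing that a degree-one $\autom$-polynomial has an at most one-dimensional root space.

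The hard part — and the only place the square-freeness hypothesis is actually used — is this degree-one bound, namely $\dim_\K\{\beta\in\L:\autom(\beta)=c\beta\}\le 1$. My plan is to route it through the fixed field: $\L^\autom=\ker(\autom-\id_\L)$ is exactly the eigenspace of $\autom$ for the eigenvalue $1$, and since $\charpoly_\autom$ is square-free this eigenspace is one-dimensional over $\K$ (the eigenvalue $1$ occurs, witnessed by the eigenvector $1\in\L$, and with multiplicity one; as $1\in\K$ the $\K$- and $\overline{\K}$-dimensions coincide), whence $\L^\autom=\K$. Given this, if $\beta_0,\beta$ are nonzero with $\autom(\beta_0)=c\beta_0$ and $\autom(\beta)=c\beta$, then $\autom(\beta/\beta_0)=\beta/\beta_0$, so $\beta/\beta_0\in\L^\autom=\K$ and the root space is the single $\K$-line $\K\beta_0$. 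I anticipate the subtlety most worth verifying to be precisely this passage from ``$\autom$ has distinct eigenvalues'' to ``$\L^\autom=\K$'': one must ensure the relevant multiplicity is computed over $\K$, which is legitimate here only because the eigenvalue $1$ already lies in $\K$, so that $\mathrm{rank}(\autom-\id_\L)$ is insensitive to the ground field.
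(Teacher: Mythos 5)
Your argument is correct, but it is a genuinely different proof from the one the paper relies on: the paper does not prove the lemma itself, it defers to \cite[Theorem~6]{robert2015phd}, whose argument (as the paper sketches) passes to a matrix representation of $\autom$ and exploits its diagonalizability. You instead give a self-contained induction on $\deg a$: split off a linear right factor $x-\autom(\alpha)/\alpha$ at each nonzero root $\alpha$ via right division (\cref{lem:division}), use $\ev_{q\cdot b}=\ev_q\circ\ev_b$ and the rank--nullity identity for compositions, and reduce everything to the degree-one case. That reduction is sound, and your treatment of the crux is also sound: square-freeness of $\charpoly_\autom$ forces the eigenvalue $1$ (witnessed by $1\in\L$) to have one-dimensional eigenspace, and since $1\in\K$ the rank of $\autom-\id_\L$ is unchanged under extension of scalars, so $\L^\autom=\K$ and the root space of $x-c$ is a single $\K$-line. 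What your route buys is transparency about where the hypothesis enters --- only through $\L^\autom=\K$, i.e.\ through $\autom$ generating $\GalLK$ --- and it avoids any base change or choice of basis except in that one spot; what the cited diagonalization proof buys is brevity, handling all degrees at once rather than peeling off one factor per step. Note also that without square-freeness the degree-one step genuinely fails (e.g.\ $\autom=\id$ gives $\ker(x-1)=\L$), so you have placed the hypothesis exactly where it is needed.
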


\begin{proof}
The proof can be found in {\cite[Theorem~6]{robert2015phd}}.
It uses matrix representations of $\autom$ and the fact that it is diagonalizable due to $\charpoly_\autom$ being square-free.
\end{proof}

\begin{theorem}\label{thm:existence_annihilator}
Let $\U \subseteq \L$ be an $s$-dimensional $\K$-subspace. If $\charpoly_\autom$ is square-free, there exists a unique monic $\autom$-polynomial $\MSP{\U}$ with $\U \subseteq \ker(\MSP{\U})$ of minimum degree.
$\MSP{\U}$ is called \emph{annihilator polynomial} of $\U$ and if $\autom(\cdot)$ can be calculated in $O(1)$, $\MSP{\U}$ can be computed in $\BigO{s^2}$ operations in $\L$.
Moreover, $\deg \MSP{\U}  = \dim \U$ and $\U = \ker(\MSP{\U})$.
\end{theorem}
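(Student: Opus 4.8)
The plan is to construct $\MSP{\U}$ explicitly by an iterative, Newton-like procedure over a basis of $\U$, and then to read off minimality, uniqueness, and the kernel identity uniformly from the degree bound of \cref{lem:dimker_deg_bound}. First I would fix a $\K$-basis $u_1,\dots,u_s$ of $\U$, set $\U_j = \LH{u_1,\dots,u_j}$, and build a chain of monic $\autom$-polynomials $\MSP{\U_0}=1,\MSP{\U_1},\dots,\MSP{\U_s}=\MSP{\U}$ with $\deg\MSP{\U_j}=j$ and $\U_j\subseteq\ker(\MSP{\U_j})$. Since each evaluation map $\ev_a$ is $\K$-linear, vanishing on the basis is equivalent to vanishing on all of $\U_j$, so it suffices to control the action on the individual $u_j$.

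The engine of the construction is to append one degree-one factor per step. Given $\MSP{\U_{j-1}}$, I would compute $\beta_j := \MSP{\U_{j-1}}(u_j)$ and set $\MSP{\U_j} = (x - \autom(\beta_j)\beta_j^{-1})\cdot\MSP{\U_{j-1}}$. Using $\ev_{a\cdot b}=\ev_a\circ\ev_b$ together with $\ev_{x-c}(\gamma)=\autom(\gamma)-c\gamma$, one checks that $\MSP{\U_j}$ still vanishes on $\U_{j-1}$ (because $\MSP{\U_{j-1}}$ does) and vanishes on $u_j$ precisely when $c=\autom(\beta_j)\beta_j^{-1}$. The crucial point, and the place where the square-free hypothesis enters, is that this choice is legitimate, i.e. that $\beta_j\neq 0$. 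Here I would invoke \cref{lem:dimker_deg_bound}: since $\deg\MSP{\U_{j-1}}=j-1$, the bound gives $\dim_\K\ker(\MSP{\U_{j-1}})\le j-1$, whereas $\U_{j-1}\subseteq\ker(\MSP{\U_{j-1}})$ already has dimension $j-1$; hence $\ker(\MSP{\U_{j-1}})=\U_{j-1}$, and as $u_j\notin\U_{j-1}$ we conclude $\beta_j\neq 0$. This self-correcting interplay between the degree bound and the growing kernel is the part to get right; everything else is bookkeeping.

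With $\MSP{\U}=\MSP{\U_s}$ in hand (monic, degree $s$, vanishing on $\U$), the remaining assertions follow from the same bound. For minimality, any nonzero $b\in\Lset$ with $\U\subseteq\ker(b)$ satisfies $\deg b\ge\dim_\K\ker(b)\ge s$, so no annihilator has degree below $s=\deg\MSP{\U}$. The identity $\U=\ker(\MSP{\U})$ holds because $\U\subseteq\ker(\MSP{\U})$ with $\dim_\K\U=s$ while $\dim_\K\ker(\MSP{\U})\le\deg\MSP{\U}=s$, forcing equality; this simultaneously gives $\deg\MSP{\U}=\dim_\K\U$. For uniqueness, if $a$ is another monic minimum-degree annihilator, then $a-\MSP{\U}$ has degree $<s$ yet contains $\U$ in its kernel, so by the bound it must be zero, i.e. $a=\MSP{\U}$.

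Finally, for the complexity I would track the cost of one iteration under the assumption that $\autom(\cdot)$ costs $\BigO{1}$. Evaluating $\MSP{\U_{j-1}}(u_j)=\smallsum_i a_i\autom^i(u_j)$ costs $\BigO{j}$ once the powers $\autom^i(u_j)$ are generated by repeated application of $\autom$, and multiplying a degree-$(j-1)$ polynomial by the degree-one factor $(x-c)$ also costs $\BigO{j}$, since $x\cdot\smallsum_i a_i x^i=\smallsum_i\autom(a_i)x^{i+1}$. Summing $\BigO{j}$ over $j=1,\dots,s$ yields the claimed $\BigO{s^2}$ operations in $\L$.
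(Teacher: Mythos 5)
Your proof is correct and follows essentially the same route as the paper: the iterative construction of Algorithm~\ref{alg:annihilator} combined with the degree bound of \cref{lem:dimker_deg_bound}. You are in fact slightly more careful than the paper's own write-up, since you explicitly verify that $\beta_j=\MSP{\U_{j-1}}(u_j)\neq 0$ (justifying the division performed in each step of the algorithm) and derive minimality directly from the degree bound rather than via Euclidean division of $\A_s$ by $\MSP{\U}$.
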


\begin{proof}
The proof is similar to \cite[Theorem~8]{robert2015phd}.
It can be shown by induction that the polynomial $\A_s$ constructed in \cref{alg:annihilator} fulfills $\U \subseteq \ker(\A_s)$.
By the Euclidean algorithm, $\A_s = \quo \cdot \MSP{\U}$ for some $\quo \in \Lset$ and $\deg \MSP{\U} \leq \deg \A_s = \dim \U$ because $\deg A_i = \deg A_{i-1}+1$ $\forall i$, $\deg A_0 = 0$ and thus $\deg \A_s = s = \dim \U$.
Also, $\deg \MSP{\U} \geq \dim \U$ by \cref{lem:dimker_deg_bound} (which assumes that $\charpoly_\autom$ is square-free), implying $\deg \MSP{\U} = \dim \U$.
Since $\MSP{\U}$ is defined to be monic, it is therefore unique and $\A_s = \MSP{\U}$.
Due $\dim(\ker(\MSP{\U})) \leq \deg \MSP{\U} = \dim \U$, together with $\U \subseteq \ker(\MSP{\U})$, it follows that $\U =\ker(\MSP{\U})$.
Line~\ref{line:ann_2} of \cref{alg:annihilator} is executed $s$ times and each loop requires
\begin{itemize}
\item one evaluation $A_{i-1}(u_i)$, costing $\BigO{s}$ operations in $\L$ by naively applying the evaluation formula~\ref{eqn:ev}
\item one computation of $\theta(A_{i-1}(u_i))$ $\Rightarrow$ $\BigO{1}$ and
\item one addition in $\Lset$ $\Rightarrow$ $\BigO{s}$,
\end{itemize}
and hence the algorithm has complexity $\BigO{s}$ in $\L$.
\end{proof}

\vspace{-1em}
\printalgoIEEE{
%\begin{algorithm}
\DontPrintSemicolon
\KwIn{$\K$-basis $(u_1,\dots,u_s)$ of $\U \subseteq \L$.}
\KwOut{$\MSP{\U}$ as in \cref{thm:existence_annihilator}.}
$\A_0 \gets 1$ \; \label{line:ann_1}
\For{$i=1,\dots,s$}{
$\A_i \gets (x-\frac{\theta(A_{i-1}(u_i))}{A_{i-1}(u_i)}) \cdot A_{i-1}$ \hfill \tcp{$\BigO{s}$}  \label{line:ann_2}
}
\Return{$\A_s$}  \label{line:ann_4}
\caption{Annihilator Polynomial \cite{robert2015phd}}
\label{alg:annihilator}
}

\begin{theorem}[{\cite[Theorem~8]{robert2015phd}}]\label{thm:interpolation}
Let $g_{1},\dots, g_{n} \in \L$, linearly independent over $\K$, and $\r = (r_{1},\dots,r_{n}) \in \L^{n}$. Then there is a unique monic $\autom$-polynomial $\rhat$ of degree $n-1$ such that
\begin{align*}
\hat{r}(g_{i}) = r_{i} \quad \forall i=1,\dots,n.
\end{align*}
\end{theorem}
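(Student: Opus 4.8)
The plan is to turn the interpolation problem into the inversion of a single $\L$-linear map and to establish bijectivity through a kernel argument. Let $V = \{a \in \Lset : \deg a \le n-1\} \cup \{0\}$ be the set of $\autom$-polynomials of degree at most $n-1$; writing $a = \smallsum_{i=0}^{n-1} a_i x^i$ identifies $V$ with its coefficient vector $(a_0,\dots,a_{n-1})$, so $\dim_\L V = n$. Define the evaluation map
\begin{align*}
\Phi : V \to \L^n, \qquad a \mapsto \bigl( a(g_1), \dots, a(g_n) \bigr).
\end{align*}
Since $a(g_j) = \smallsum_{i=0}^{n-1} a_i \autom^i(g_j)$ is $\L$-linear in $(a_0,\dots,a_{n-1})$, the map $\Phi$ is $\L$-linear, and the sought polynomial $\rhat$ is exactly the preimage $\Phi^{-1}(\r)$. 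Thus existence and uniqueness of an interpolating $\autom$-polynomial of degree at most $n-1$ are equivalent to $\Phi$ being a bijection.

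First I would prove that $\Phi$ is injective, which is the heart of the argument. Suppose $a \in V$ with $\Phi(a) = 0$, i.e. $a(g_j) = 0$ for all $j$, so that $g_1,\dots,g_n \in \ker(a)$. As the $g_j$ are linearly independent over $\K$, they generate an $n$-dimensional $\K$-subspace of the $\K$-subspace $\ker(a)$, whence $\dim_\K \ker(a) \ge n$. If $a \neq 0$, then $\deg a \le n-1$, and \cref{lem:dimker_deg_bound} yields $\dim_\K \ker(a) \le \deg a \le n-1$, a contradiction. Hence $a = 0$, and $\Phi$ has trivial kernel.

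Finally, a linear map between $\L$-vector spaces of equal finite dimension is surjective as soon as it is injective (by rank--nullity), so $\Phi$ is an isomorphism; consequently every $\r \in \L^n$ has a unique preimage $\rhat$, giving the unique interpolating $\autom$-polynomial of degree at most $n-1$. The main obstacle is precisely this injectivity step: it is the nondegeneracy of the generalized Moore/Vandermonde system $\bigl( \autom^i(g_j) \bigr)_{i,j}$, and it hinges on the bound $\dim_\K \ker(a) \le \deg a$. Because \cref{lem:dimker_deg_bound} supplies that bound only when $\charpoly_\autom$ is square-free, I would carry this assumption on $\autom$ through the statement; granting it, the remainder is routine dimension counting, with the leading coefficient of $\rhat$ then being determined by $\r$.
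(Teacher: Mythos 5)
Your argument is correct, and there is in fact no in-paper proof to compare it against: the paper states this result only as a citation of \cite[Theorem~8]{robert2015phd} and gives no argument of its own. Your route --- identify the degree-$\le n-1$ polynomials with $\L^n$, observe that the evaluation map $\Phi$ is $\L$-linear, kill its kernel with \cref{lem:dimker_deg_bound}, and conclude by rank--nullity --- is the standard one, and amounts to the nonsingularity of the generalized Moore matrix $(\autom^i(g_j))_{i,j}$. You are also right to insist on carrying the hypothesis that $\charpoly_\autom$ is square-free: \cref{lem:dimker_deg_bound} is the only nontrivial ingredient and requires it, whereas the paper only imposes that assumption globally at the start of Section~\ref{sec:Decoding}, after this theorem has already been stated. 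One small mismatch with the statement as printed: what you prove (correctly) is that there is a unique interpolating $\autom$-polynomial of degree \emph{at most} $n-1$; such a polynomial is in general neither monic nor of degree exactly $n-1$, so the words ``monic \ldots of degree $n-1$'' cannot be taken literally. This is an imprecision inherited from the source, not a gap in your proof --- the paper itself later uses only $\deg\rhat < n$, which is exactly what your argument delivers.
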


\subsection{Rank Metric in Characteristic Zero}

Let $\K \subseteq \L$ be fields, $\L/\K$ a Galois extension of degree $m$ and $\Bb$ a basis of $L$ over $K$. The number of $k$-linearly independent columns of a matrix $\mathbf{X}$ is denoted by $\rk_k(\mathbf{X})$ for $k \in \{L,K\}$. We define the matrices
\arraycolsep=0pt
\begin{align*}
\mathbf{X_{\theta}} = 
\begin{pmatrix}
x_{1} & \dots & x_{n} \\
\theta(x_{1}) & \dots & \theta(x_{n}) \\
\vdots & \ddots & \vdots \\
\theta^{m-1}(x_{1}) & \dots &  \theta^{m-1}(x_{n})
\end{pmatrix}, \,
\mathbf{X_{\Bb}} = 
\begin{pmatrix}
x_{1,1} & \dots & x_{n,1} \\
x_{1,1} & \dots & x_{n,1} \\
\vdots & \ddots & \vdots \\
x_{1,m} & \dots &  x_{n,m}
\end{pmatrix},
\end{align*}
where $(x_{i,1},\dots,x_{i,m})^T \in K^m$ is the representation of $x_i \in L$ in the basis $\Bb$.
In \cite[Section~2.2]{robert2015phd} four definitions of rank weight in characteristic zero are given. 
\begin{definition}[\cite{robert2015phd}]\label{def:rank_metrics}
Let $\mathbf{x} \in \L^n$. We define the rank weights
%$\omega_{1}(\mathbf{x}) = deg(\MSP{\LH{x_1,\dots,x_n}})$, $\omega_{2}(\mathbf{x}) = \rk_\L(\mathbf{X_{\theta}})$, $\omega_{3}(\mathbf{x}) = \rk_\K(\mathbf{X_{\theta}})$, and 
%$\omega_{4}(\mathbf{x}) = \rk_\K(\mathbf{X_{\Bb}})$.
\begin{align*}
\omega_{1}(\mathbf{x}) &= deg(\MSP{\LH{x_1,\dots,x_n}}) \\
\omega_{2}(\mathbf{x}) &= \rk_\L(\mathbf{X_{\theta}}) \\
\omega_{3}(\mathbf{x}) &= \rk_\K(\mathbf{X_{\theta}}) \\
\omega_{4}(\mathbf{x}) &= \rk_\K(\mathbf{X_{\Bb}})
\end{align*}
The corresponding rank metrics can be defined as 
\begin{align*}
\mathrm{d}_{\mathrm{R},i}(\mathbf{x},\mathbf{y}) = \omega_i (\mathbf{x}-\mathbf{y}) \quad \forall i \in \{1,2,3,4\}.
\end{align*}
\end{definition}

In the finite field case, these rank weights are the same.
Over characteristic zero, the following relation can be proven.

\begin{lemma}{\cite[Lemmata~13, 14, and 15]{robert2015phd}}\label{lem:rank_weights}
\begin{align*}
\omega_{1}(\mathbf{x}) = \omega_{2}(\mathbf{x}) \leq \omega_{3}(\mathbf{x}) = \omega_{4}(\mathbf{x}) 
\end{align*}
\end{lemma}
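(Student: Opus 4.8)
The plan is to prove the stronger fact that, under the standing assumption that $\charpoly_\autom$ is square-free, all four weights equal $s := \dim_\K \U$, where $\U := \LH{x_1,\dots,x_n}$ is the $\K$-span of the coordinates of $\vec{x}$; the displayed chain is then an immediate consequence.

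First I would dispose of the outer equality $\omega_3 = \omega_4 = s$, which needs only elementary linear algebra. A vector $(c_1,\dots,c_n)^{T} \in \K^n$ is a $\K$-relation among the columns of $\mathbf{X}_{\Bb}$ precisely when $\sum_i c_i x_i = 0$, since the $i$-th column is the coordinate vector of $x_i$ and the coordinate map $\L \to \K^m$ is a $\K$-isomorphism; hence $\omega_4 = n - \dim_\K\{\text{relations}\} = s$. The same relation space controls $\mathbf{X}_{\autom}$: its top row forces $\sum_i c_i x_i = 0$, and conversely, because every $\autom^j \in \GalLK$ is $\K$-linear and fixes $\K$ pointwise, $\sum_i c_i x_i = 0$ yields $\sum_i c_i \autom^j(x_i) = 0$ for all $j$. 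Thus the columns of $\mathbf{X}_{\autom}$ have exactly the same $\K$-relation space, giving $\omega_3 = s$. The middle inequality $\omega_2 \le \omega_3$ is then immediate, as any columns independent over $\L$ remain independent over the subfield $\K$, so the maximal number of $\L$-independent columns cannot exceed the maximal number of $\K$-independent ones.

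It remains to establish $\omega_1 = \omega_2 = s$, and this is where the $\autom$-polynomial machinery enters. The equality $\omega_1 = s$ is exactly \cref{thm:existence_annihilator}, which states $\deg \MSP{\U} = \dim_\K \U$. For $\omega_2$ the bound $\omega_2 \le \omega_3 = s$ is already in hand, so only $\omega_2 \ge s$ is needed. The crucial reformulation is that the rows of $\mathbf{X}_{\autom}$ are $R_j = (\autom^j(x_1),\dots,\autom^j(x_n))$, so an $\L$-linear relation $\sum_j a_j R_j = 0$ says precisely that the $\autom$-polynomial $a = \sum_j a_j x^j \in \Lset$ satisfies $a(x_i) = 0$ for every $i$, i.e. $\U \subseteq \ker(a)$. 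Assume for contradiction that $r := \omega_2 = \rk_\L(\mathbf{X}_{\autom}) < s \le m$. Since the row space has $\L$-dimension $r < m$, the $r+1$ rows $R_0,\dots,R_r$ are $\L$-linearly dependent, producing a nonzero $a$ with $\deg a \le r$ and $\U \subseteq \ker(a)$; but \cref{lem:dimker_deg_bound} then gives $s = \dim_\K \U \le \dim_\K \ker(a) \le \deg a \le r$, contradicting $r < s$. Hence $\omega_2 = s = \omega_1$, completing the chain.

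I expect the lower bound $\omega_2 \ge s$ to be the only delicate step, and it is precisely here that the square-free hypothesis is indispensable: it enters through \cref{lem:dimker_deg_bound}, whose proof relies on every matrix representation of $\autom$ being diagonalizable. If $\charpoly_\autom$ fails to be square-free, a $\autom$-polynomial of small degree can annihilate a $\K$-space larger than its degree, the row dependency no longer forces $s \le r$, and $\omega_2$ may genuinely fall below $\omega_3$ — which is exactly why the robust statement carries an inequality, rather than an equality, in the middle.
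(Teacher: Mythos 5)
The paper offers no proof of this lemma at all --- it is imported wholesale from Robert's thesis (Lemmata~13--15) --- so there is nothing internal to compare your argument against; it has to be judged on its own. On its own terms it is correct, and it actually proves the stronger fact that \emph{all four} weights equal $s=\dim_\K\LH{x_1,\dots,x_n}$ under the square-freeness hypothesis: the identification of the common $\K$-relation space of the columns of $\mathbf{X}_{\autom}$ and $\mathbf{X}_{\Bb}$ gives $\omega_3=\omega_4=s$; a $\K$-relation being in particular an $\L$-relation gives $\omega_2\le\omega_3$; \cref{thm:existence_annihilator} gives $\omega_1=s$; and your row-dependence argument, fed into \cref{lem:dimker_deg_bound}, gives $\omega_2\ge s$. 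All steps check out (you silently use that row rank equals column rank over $\L$, which is fine).

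The one substantive caveat concerns hypotheses. The lemma sits in Section~\ref{sec:Gabidulin}, \emph{before} the paper imposes square-freeness at the start of Section~\ref{sec:Decoding}, and the ``$\leq$'' in the middle only has content in the non-square-free case --- as your own closing paragraph observes. Your route to the first equality $\omega_1=\omega_2$ goes through $\omega_1=s$ and $\omega_2=s$, both of which genuinely require square-freeness, so in the general setting you have proved only $\omega_2\le\omega_3=\omega_4$ and left $\omega_1=\omega_2$ open. That equality holds unconditionally by a more direct argument: the annihilators of $\U=\LH{x_1,\dots,x_n}$ form a left ideal of $\Lset$, principal because $\Lset$ admits right division, generated by the monic minimal annihilator of degree $\omega_1(\mathbf{x})$; the $\L$-linear relations among the rows of $\mathbf{X}_{\autom}$ correspond exactly to the annihilators of degree $<m$, an $\L$-space of dimension $m-\omega_1(\mathbf{x})$, so the row rank --- hence $\omega_2(\mathbf{x})$ --- equals $\omega_1(\mathbf{x})$ with no diagonalizability needed. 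Since the paper only ever invokes the lemma inside Section~\ref{sec:Decoding}, where square-freeness is in force, your restricted version covers every application made of it here; but to prove the statement as cited you should replace the detour through $s$ by this ideal-theoretic argument for the first equality.
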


\subsection{Gabidulin Codes}

Gabidulin codes were originally defined by \cite{gabidulin1985theory,delsarte1978bilinear,roth1991maximum} over finite fields.
In \cite{augot2013rank}, the definition was extended to certain fields of characteristic zero, using $\autom$-polynomials instead of linearized polynomials.

\begin{definition}
Let $g_1,\dots,g_n \in \L$ be linearly independent over $\K$. Then a Gabidulin code of length $n$ and dimension $k \leq n$ is defined as
\begin{align*}
\CGab[n,k] = \left\{ (f(g_1), \dots, f(g_n)) \, : \, f \in \Lset \, \land \, \deg f < k  \right\}.
\end{align*}
\end{definition}

An overview of properties can be found in \cite{robert2015phd}.

\section{A New Decoding Approach}
\label{sec:Decoding}

In the following, let $\L, \K$ and $\autom \in \GalLK$ be such that $\charpoly_\autom$ is square-free.
We assume that $\autom(\cdot)$ can be computed in $O(1)$ operations in $\L$.
Under these assumptions, the latter only being important for complexity statements, we show that the decoding problem is similar to the finite field case.

Suppose that a codeword $\c \in \CGab$ is corrupted by an error $\e = (e_1,\dots,e_n) \in \L^n$ of rank weight $\terr := \wtR(\e)$. The \emph{received word} is then given by
\begin{align*}
\r = \c + \e  \in \L^n.
\end{align*}
We say that $\terr$ errors occurred.
The goal of decoding is to recover $\c$ from $\r$ if $\terr$ is not too large.

\subsection{Key Equation}

\begin{definition}\label{def:elp}
We define the error span polynomial
\begin{align*}
\Lambda = \MSP{\LH{e_1,\dots,e_n}}.
\end{align*}
\end{definition}

The following lemma is, in contrary to the finite field case, not obvious (cf. \cref{thm:existence_annihilator}) and only holds for the case of $\charpoly_\autom$ being square-free.

\begin{lemma}\label{lem:deg_Lambda}
$\deg \Lambda = \terr$
\end{lemma}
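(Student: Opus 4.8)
The plan is to deduce the claim directly from \cref{thm:existence_annihilator}, since by \cref{def:elp} the error span polynomial is exactly the annihilator polynomial of the coordinate span $\U := \LH{e_1,\dots,e_n} \subseteq \L$, that is, $\Lambda = \MSP{\U}$. First I would make the two quantities in the statement explicit. On one side we have $\deg \Lambda = \deg \MSP{\U}$; on the other, $\terr = \wtR(\e)$. By \cref{def:rank_metrics} and \cref{lem:rank_weights}, under the standing assumption that $\charpoly_\autom$ is square-free all four rank weights coincide, and the matrix-rank version $\omega_4(\e) = \rk_\K(\mathbf{X_{\Bb}})$ equals the $\K$-dimension of the column span of $\mathbf{X_{\Bb}}$, which via the basis isomorphism $\L \cong \K^m$ is precisely $\dim_\K \U$. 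Hence $\terr = \dim_\K \U$, and the claim reduces to the single equality $\deg \MSP{\U} = \dim_\K \U$.

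That equality is exactly the final assertion of \cref{thm:existence_annihilator}, applied to the subspace $\U$. Concretely, writing $s := \dim_\K \U$ and choosing a $\K$-basis of $\U$, the theorem (valid because $\charpoly_\autom$ is square-free) produces a unique monic annihilator of degree exactly $s$; since $\Lambda = \MSP{\U}$ is monic by definition, $\deg \Lambda = s = \terr$ follows. So the argument is essentially a bookkeeping step connecting \cref{def:elp} to the degree formula already established.

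The point worth emphasising is where the square-free hypothesis enters, which is also why the lemma is not automatic in characteristic zero. The inequality $\deg \Lambda \le \terr$ is the easy half: the explicit construction in \cref{alg:annihilator} yields, for any $s$-dimensional span, an annihilating $\autom$-polynomial of degree $s$, and $\Lambda$ is by definition of minimal degree among all annihilators of $\U$. The delicate half is the reverse inequality $\deg \Lambda \ge \terr$. Here one uses $\U \subseteq \ker(\Lambda)$ together with the bound $\dim_\K(\ker(\Lambda)) \le \deg \Lambda$ of \cref{lem:dimker_deg_bound}, chaining $\terr = \dim_\K \U \le \dim_\K \ker(\Lambda) \le \deg \Lambda$. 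It is precisely \cref{lem:dimker_deg_bound}, and hence this reverse inequality, that can fail without square-freeness: for a non-diagonalisable $\autom$ a polynomial may annihilate strictly more than $\deg \Lambda$ dimensions, so that an annihilator of smaller degree could exist and $\deg \Lambda < \terr$ could occur. Controlling this gap is the only real obstacle, and it is removed entirely by the square-free assumption through \cref{lem:dimker_deg_bound}.
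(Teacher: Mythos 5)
Your proposal is correct and follows essentially the same route as the paper: identify $\Lambda$ as the annihilator polynomial of $\U = \LH{e_1,\dots,e_n}$, identify $\terr$ with $\dim_\K \U$, and invoke the equality $\deg \MSP{\U} = \dim \U$ from \cref{thm:existence_annihilator}; your added discussion of where the square-free hypothesis enters (via \cref{lem:dimker_deg_bound} for the lower bound on the degree) is exactly the content hidden inside that theorem. The only small imprecision is attributing the coincidence of all four rank weights to \cref{lem:rank_weights}, which by itself only gives $\omega_1 = \omega_2 \leq \omega_3 = \omega_4$ --- the full coincidence under square-freeness again requires \cref{thm:existence_annihilator} --- but this does not affect the validity of your argument.
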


\begin{proof}
This follows directly from \cref{thm:existence_annihilator} together with $\deg \Lambda = \dim \LH{e_1,\dots,e_n} = \wtR(\e) = \terr$.
\end{proof}

The following lemma is necessary to prove \cref{thm:key_equation}, the main statement of this section.

\begin{lemma}\label{lem:ab_equiv_modulo_MSP}
Let $\U \subseteq \L$ be a $\K$-subspace and $a,b \in \Lset$.
\begin{align*}
a \equiv b \mod \MSP{\U} \quad \Leftrightarrow \quad a(u) = b(u) \; \;  \forall \, u \in \U
\end{align*}
\end{lemma}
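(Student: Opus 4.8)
The plan is to prove the two implications separately, relying on three ingredients already available: the definition of the right congruence relation, the compositional behaviour $\ev_{a \cdot b} = \ev_a \circ \ev_b$ of the evaluation map under multiplication, and the minimality that characterises the annihilator polynomial $\MSP{\U}$.

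For the forward direction I would assume $a \equiv b \mod \MSP{\U}$, which by definition gives $a = b + d \cdot \MSP{\U}$ for some $d \in \Lset$. Fixing an arbitrary $u \in \U$ and evaluating, the composition rule yields $a(u) = b(u) + d\big(\MSP{\U}(u)\big)$. Since $\U \subseteq \ker(\MSP{\U})$ by \cref{thm:existence_annihilator}, we have $\MSP{\U}(u) = 0$, and as evaluation maps are $\K$-linear they send $0$ to $0$, so the last term vanishes and $a(u) = b(u)$ for every $u \in \U$.

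For the converse I would set $c := a - b \in \Lset$. Additivity of evaluation turns the hypothesis into $c(u) = a(u) - b(u) = 0$ for all $u \in \U$, that is $\U \subseteq \ker(c)$. Applying right division by $\MSP{\U}$ (\cref{lem:division}) gives $c = \quo \cdot \MSP{\U} + \rem$ with $\deg \rem < \deg \MSP{\U}$. Evaluating at $u \in \U$ and again invoking the composition rule together with $\MSP{\U}(u) = 0$, the term $\quo(\MSP{\U}(u))$ drops out, leaving $\rem(u) = c(u) = 0$, so that $\U \subseteq \ker(\rem)$ as well.

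The crux is to deduce $\rem = 0$ from this. If $\rem$ were nonzero, then \cref{lem:dimker_deg_bound}—which is exactly where the square-free assumption on $\charpoly_\autom$ is used—would give $\dim_\K \ker(\rem) \leq \deg \rem < \deg \MSP{\U} = \dim_\K \U$, contradicting $\U \subseteq \ker(\rem)$, which forces $\dim_\K \U \leq \dim_\K \ker(\rem)$. Hence $\rem = 0$, so $a - b = c = \quo \cdot \MSP{\U}$ and therefore $a \equiv b \mod \MSP{\U}$. I expect this last step to be the main obstacle: without the degree–dimension bound a nonzero remainder of smaller degree could still vanish on all of $\U$, so the equivalence genuinely depends on $\charpoly_\autom$ being square-free, precisely as the surrounding remarks emphasise.
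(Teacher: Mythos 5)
Your proposal is correct and follows essentially the same route as the paper: both directions reduce to right division by $\MSP{\U}$ and the observation that the remainder must vanish on all of $\U$. The only difference is the justification of the final step $\rem=0$: you invoke \cref{lem:dimker_deg_bound} together with $\deg\MSP{\U}=\dim_\K\U$, whereas the paper appeals directly to the minimality of $\MSP{\U}$ as the least-degree monic annihilator of $\U$; both arguments are valid and rest on the same square-free hypothesis.
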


\begin{proof}
By \cref{lem:division}, there are $\quo,\rem \in \Lset$ with
\begin{align*}
a-b = \quo \cdot \MSP{\U} + \rem
\end{align*}
and $\deg \rem < \deg \MSP{\U}$.
Then,
\begin{align*}
a(u) &= b(u) \;  \forall \, u \in \U \\
\Leftrightarrow \quad a(u)-b(u) &= (a-b)(u) = (\quo \cdot \MSP{\U} + \rem)(u) \\
&= \quo(\MSP{\U}(u)) + \rem(u) = \quo(0)+\rem(u) \\
&= \rem(u) = 0 \;  \forall \, u \in \U.
\end{align*}
Also, $\rem(u) = 0$ for all $u \in \U$ if and only if $\rem=0$, since otherwise it would contradict the minimality of $\MSP{\U}$.
\end{proof}

Let $\rhat$ be the known interpolation polynomial of degree $\deg \rhat < n$ corresponding to the received word $\r$ as in Theorem~\ref{thm:interpolation}.
Recall that $f$ is the unknown information polynomial of degree $\deg f < k$ and $\Lambda$ is the unkown error span polynomial.
Also, $\Gpoly$ is known and has degree $\deg \Gpoly = n$, since the $g_i$'s are linearly independent.
The following statement is an analogue to Gao's key equation for Reed--Solomon codes and a generalization of \cite[Theorem~3.6]{wachter2013decoding}, where it was proven for finite field Gabidulin codes.

\begin{theorem}[Key Equation]\label{thm:key_equation}
\begin{align}
\Lambda \cdot \rhat \equiv \Lambda \cdot f \mod \MSP{\LH{g_1,\dots,g_n}} \label{eq:key_equation}
\end{align}
\end{theorem}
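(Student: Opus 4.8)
The plan is to translate the noncommutative congruence into a finite set of pointwise identities in $\L$ and then verify those directly. The right tool is \cref{lem:ab_equiv_modulo_MSP}, applied with $\U = \LH{g_1,\dots,g_n}$: it shows that \eqref{eq:key_equation} holds if and only if $(\Lambda \cdot \rhat)(u) = (\Lambda \cdot f)(u)$ for every $u \in \U$. Since the evaluation map of any $\autom$-polynomial is $\K$-linear, and since the $g_i$ are $\K$-linearly independent and hence form a $\K$-basis of $\U$, it is enough to check this identity on the generators, i.e.\ to establish $(\Lambda \cdot \rhat)(g_i) = (\Lambda \cdot f)(g_i)$ for each $i \in \{1,\dots,n\}$.

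To evaluate each side, I would use the composition rule $\ev_{a\cdot b} = \ev_a \circ \ev_b$ for products of $\autom$-polynomials. This yields $(\Lambda \cdot \rhat)(g_i) = \Lambda(\rhat(g_i)) = \Lambda(r_i)$, where the second equality is the interpolation property $\rhat(g_i) = r_i$ from \cref{thm:interpolation}, and likewise $(\Lambda \cdot f)(g_i) = \Lambda(f(g_i)) = \Lambda(c_i)$, using $c_i = f(g_i)$ for the codeword coordinates. The goal then reduces to $\Lambda(r_i) = \Lambda(c_i)$ for all $i$. Substituting $r_i = c_i + e_i$ and invoking $\K$-linearity of $\Lambda$ once more gives
\begin{align*}
\Lambda(r_i) = \Lambda(c_i + e_i) = \Lambda(c_i) + \Lambda(e_i).
\end{align*}
By \cref{def:elp} we have $\Lambda = \MSP{\LH{e_1,\dots,e_n}}$, and \cref{thm:existence_annihilator} guarantees $\LH{e_1,\dots,e_n} = \ker(\Lambda)$, so $\Lambda(e_i) = 0$ for every $i$. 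Hence $\Lambda(r_i) = \Lambda(c_i)$, which is exactly the required identity.

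The arithmetic here is short, so the step deserving the most care is the opening reduction rather than any computation. The congruence in \eqref{eq:key_equation} is a statement about $\autom$-polynomials modulo a right ideal in a noncommutative ring, and it is essential that \cref{lem:ab_equiv_modulo_MSP} transports it faithfully to the evaluation side and that $\K$-linearity lets me descend from all of $\U$ to the finite basis $\{g_i\}$. Once that bridge is in place, the whole argument rests on the single structural fact that $\Lambda$ was constructed precisely to annihilate the error coordinates, which is what makes the error term disappear and leaves the known data $\rhat$ matching the unknown codeword polynomial $f$ modulo $\Gpoly$.
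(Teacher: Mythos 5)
Your proposal is correct and follows essentially the same route as the paper's own proof: both reduce the congruence to pointwise equality on $\LH{g_1,\dots,g_n}$ via \cref{lem:ab_equiv_modulo_MSP}, use $\K$-linearity to work on the generators $g_i$, and conclude from $\ev_{\Lambda\cdot\rhat}(g_i)-\ev_{\Lambda\cdot f}(g_i)=\Lambda(r_i-c_i)=\Lambda(e_i)=0$. The only cosmetic difference is that the paper carries the explicit linear combination $u=\sum_i \alpha_i g_i$ through the computation, whereas you invoke linearity to restrict to the basis up front.
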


\begin{proof}
Let $u \in \LH{g_1,\dots,g_n}$. Then, we can write $u$ as a $\K$-linear combination of the $g_i$'s, $u = \sum_{i=1}^{n} \alpha_i g_i$, and
\begin{align*}
&(\Lambda \cdot \rhat)(u) - (\Lambda \cdot f)(u) = \Lambda(\rhat(u)-f(u)) \\
&=\Lambda(\rhat(\smallsum_{i=1}^{n} \alpha_i g_i)-f(\smallsum_{i=1}^{n} \alpha_i g_i)) \\
&= \smallsum_{i=1}^{n} \alpha_i \Lambda(\rhat(g_i)-f(g_i)) 
= \smallsum_{i=1}^{n} \alpha_i \Lambda(r_i-c_i) \\
&= \smallsum_{i=1}^{n} \alpha_i \Lambda(e_i)
= 0.
\end{align*}
The statement follows by \cref{lem:ab_equiv_modulo_MSP}.
\end{proof}

\subsection{Decoding using Shift Register Synthesis Problems}

Since it is hard to directly find a solution to the key equation, which is non-linear, we try to find a solution to the following shift register synthesis problem, which is formulated in a similar way as the problem which is solved in \cite{fitzpatrick1995key} over ordinary polynomial rings.

\begin{definition}
Let $k$, $\rhat$ and $\Gpoly$ be given as above.
A \emph{shift register problem} (SRP) is the problem of finding $(\lambda,\omega)~\in~(\Lset^*)^2$ such that
\begin{align}
\lambda \rhat \equiv \omega &\mod \Gpoly \label{eq:congruence} \\
\deg \lambda &> \deg \omega + k \label{eq:degree}\\
\deg \lambda &\text{ minimal} \label{eq:minimality}
\end{align}
\end{definition}

The following theorem is, besides the key equation, the main statement of this paper.
It proves that the decoding problem and the SRP are equivalent if the number of errors is less than half the minimum distance.
%In many coding theoretic publications, this relation is assumed without a proof, whenever a non-linear decoding problem is reduced to another easily solvable linear problem.

\begin{theorem}\label{thm:key_equation_and_SRP}
If $\tau < \frac{d}{2}$, the SRP has a solution $(\lambda,\omega)$ and any such solution satisfies
\begin{align*}
(\Lambda, \Lambda f) = \alpha (\lambda,\omega)
\end{align*}
for some $\alpha \in \L^*$, minimum distance $d$ and information polynomial $f~\in~\Lset$.
\end{theorem}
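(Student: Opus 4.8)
The plan is to establish feasibility and the claimed rigidity separately. First I would verify that the pair $(\Lambda,\Lambda f)$ is itself a feasible point of the SRP (taking $f\neq 0$, so that both entries lie in $\Lset^*$). The congruence \eqref{eq:congruence} for this pair is precisely the key equation of \cref{thm:key_equation}. For the degree constraint \eqref{eq:degree} I would use that $\Lset$ is a domain, so degrees are additive: with $\deg\Lambda=\terr$ from \cref{lem:deg_Lambda} and $\deg(\Lambda f)=\terr+\deg f$, the requirement that $\deg\omega$ stay below $\deg\lambda+k$ collapses to $\deg f<k$, which holds. This shows a minimal solution exists, and since $\deg\Lambda=\terr$, every minimal solution $(\lambda,\omega)$ automatically has $\deg\lambda\leq\terr$.

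For the converse I would take an arbitrary minimal solution $(\lambda,\omega)$ and introduce the auxiliary polynomial $\Delta:=\omega-\lambda f$; the goal is to force $\Delta=0$. The first move is to evaluate everything on $V:=\LH{g_1,\dots,g_n}$. Applying \cref{lem:ab_equiv_modulo_MSP} to \eqref{eq:congruence} gives $\omega(g_i)=\lambda(r_i)$, while $\ev_{\lambda f}=\ev_\lambda\circ\ev_f$ gives $(\lambda f)(g_i)=\lambda(c_i)$; since $\ev_\lambda$ is $\K$-linear and $r_i-c_i=e_i$, these combine to $\Delta(g_i)=\lambda(e_i)$ for every $i$. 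Hence $\ev_\Delta(V)\subseteq\lambda(\LH{e_1,\dots,e_n})$, a $\K$-space of dimension at most $\terr$, so $\ker(\Delta)$ contains a subspace of $V$ of dimension at least $n-\terr$.

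The crux---and the step I expect to be the main obstacle---is to turn this dimension count into $\Delta=0$ using the error bound. Assuming $\Delta\neq 0$, \cref{lem:dimker_deg_bound} (which is exactly where square-freeness of $\charpoly_\autom$ enters) bounds $\dim_\K\ker(\Delta)\leq\deg\Delta$, whereas \eqref{eq:degree}, which keeps $\deg\omega$ below $\deg\lambda+k$, together with $\deg(\lambda f)<\deg\lambda+k$, gives $\deg\Delta<\deg\lambda+k\leq\terr+k$. Chaining the inequalities yields $n-\terr<\terr+k$, i.e.\ $n-k<2\terr$, which contradicts $\terr<d/2$ for the minimum distance $d=n-k+1$. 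So $\Delta=0$, i.e.\ $\omega=\lambda f$, and then $\lambda(e_i)=\Delta(g_i)=0$ for all $i$ shows $\LH{e_1,\dots,e_n}\subseteq\ker\lambda$. To close the argument I would invoke the minimality of the annihilator $\Lambda=\MSP{\LH{e_1,\dots,e_n}}$ from \cref{thm:existence_annihilator}: right-dividing $\lambda=\quo\cdot\Lambda+\rem$ by \cref{lem:division} and evaluating on the error span kills $\rem$ (otherwise it would be a lower-degree annihilator), so $\lambda=\quo\cdot\Lambda$; comparing degrees forces $\deg\quo=0$, i.e.\ $\quo=\alpha\in\L^*$, and therefore $\omega=\lambda f=\alpha\,\Lambda f$, giving $(\lambda,\omega)=\alpha(\Lambda,\Lambda f)$ as claimed.
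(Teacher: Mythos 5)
Your proof is correct, and its overall architecture matches the paper's: establish feasibility of $(\Lambda,\Lambda f)$ via the key equation and the degree condition, deduce $\omega=\lambda f$ for any solution, then use minimality of the annihilator to conclude $\lambda$ is a scalar multiple of $\Lambda$. The one step where you genuinely diverge is the deduction of $\omega=\lambda f$. The paper left-multiplies $\omega-\lambda f$ by the auxiliary annihilator $\MSP{\LH{\lambda(e_1),\dots,\lambda(e_n)}}$ (of degree at most $\terr$), observes that the product has degree $<2\terr+k\leq n$ yet vanishes at the $n$ linearly independent points $g_i$, concludes the product is zero by \cref{lem:dimker_deg_bound}, and then cancels using that $\Lset$ is an integral domain. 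You instead apply rank--nullity to $\ev_{\omega-\lambda f}$ restricted to $\LH{g_1,\dots,g_n}$: its image lies in $\lambda(\LH{e_1,\dots,e_n})$, so its kernel has dimension at least $n-\terr\geq\terr+k$, whereas \cref{lem:dimker_deg_bound} caps $\dim_\K\ker(\omega-\lambda f)$ at $\deg(\omega-\lambda f)<\terr+k$ unless $\omega-\lambda f=0$. Both arguments rest on the same two ingredients ($2\terr+k\leq n$ and \cref{lem:dimker_deg_bound}); yours is slightly more economical in that it needs no auxiliary polynomial and no appeal to the integral-domain property. You are also more explicit than the paper at the end (carrying out the right division of $\lambda$ by $\Lambda$ and killing the remainder via minimality), and you correctly flag the degenerate case $f=0$, which the paper glosses over.
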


\begin{proof}
We first prove that the SRP has a solution and all solutions satisfy $\omega = \lambda f$, by applying similar arguments as in the proof of \cite[Theorem~25]{robert2015phd}.
Then we show that the solution is unique up to a scalar multiplication.
By \cref{thm:key_equation}, $(\Lambda, \Lambda f)$ fulfills the congruence relation \eqref{eq:congruence} and due to
\begin{align*}
\deg \Lambda f = \deg \Lambda + \deg f < \deg \Lambda + k,
\end{align*}
it also satisfies the degree condition \eqref{eq:degree}.
Thus, the SRP has a solution\footnote{Either $(\Lambda,\Lambda f)$ or a ``smaller'' solution in terms of $\deg \lambda$} $(\lambda,\omega)$, and by \cref{lem:deg_Lambda}, any such solution satisfies
\begin{align}
\deg \lambda &\leq \deg \Lambda = \tau, \label{eq:lambda_degree} \\
\deg \omega &< \deg \lambda + k = \tau + k. \label{eq:omega_degree}
\end{align}
We also know that $\dim \LH{e_1,\dots,e_n} = \tau$, implying
\begin{align*}
\deg \MSP{\LH{\lambda(e_1),\dots,\lambda(e_n)}} = \dim \LH{\lambda(e_1),\dots,\lambda(e_n)} \leq \tau
\end{align*}
and thus,
\begin{align*}
\deg \MSP{\LH{\lambda(e_1),\dots,\lambda(e_n)}} (\omega-\lambda f) < \deg \MSP{\LH{\lambda(e_1),\dots,\lambda(e_n)}} + \tau + k \\
\leq 2 \tau + k \leq 2 \tfrac{d-1}{2} + k = n-k+k = n,
\end{align*}
Due to \eqref{eq:congruence} and Lemma~\ref{lem:ab_equiv_modulo_MSP}, $\lambda(\rhat(g_i)) = (\lambda \rhat)(g_i) = \omega(g_i)$ for all $i$, we obtain
\begin{align*}
&\MSP{\LH{\lambda(e_1),\dots,\lambda(e_n)}} (\omega-\lambda f) (g_i) \\
&= \MSP{\LH{\lambda(e_1),\dots,\lambda(e_n)}} (\omega(g_i)-\lambda(f(g_i))) \\
&= \MSP{\LH{\lambda(e_1),\dots,\lambda(e_n)}} (\lambda(\rhat(g_i))-\lambda(f(g_i))) \\
&= \MSP{\LH{\lambda(e_1),\dots,\lambda(e_n)}} (\lambda(r_i-c_i)) \\
&= \MSP{\LH{\lambda(e_1),\dots,\lambda(e_n)}} (\lambda(e_i)) = 0
\end{align*}
Thus, we obtain $\MSP{\LH{\lambda(e_1),\dots,\lambda(e_n)}} (\omega-\lambda f) = 0$ because the polynomial has degree $<n$ but evaluates to $0$ at $n$ linearly independent positions (cf. \cref{lem:dimker_deg_bound}).
Since $\Lset$ is an integral domain, we get $\omega = \lambda f$.

Together with the congruence relation \eqref{eq:congruence},
it follows that
\begin{align*}
\lambda(\rhat-f) \equiv 0 \mod \Gpoly,
\end{align*}
thus, $\lambda(e_i) = \lambda(\rhat-f)(g_i) = 0$ $\forall i=1,\dots,n$.
Due to $\deg \lambda \leq \deg \Lambda$, $\lambda$ must be the annihilator polynomial $\Lambda$ of $\LH{e_1,\dots,e_n}$ multiplied by a scalar $\alpha^{-1} = \lambda_{\deg \lambda} \in \L^*$, the leading coefficent of the polynomial $\lambda$.
Hence, also
\begin{align}
\alpha \omega = \alpha \lambda f = \Lambda f \label{eq:alpha}
\end{align}
and the claim is proven.
\end{proof}

\begin{remark}
In the case $\tau < \frac{d}{2}$, a solution of the SRP is also a solution to the \emph{linear reconstruction problem} discussed in \cite{robert2015phd}.
This follows from the degree conditions \eqref{eq:lambda_degree} and \eqref{eq:omega_degree}, and the observation that $\lambda(\rhat(g_i)) = \omega(g_i)$ for all $i=1,\dots,n$.
\end{remark}

We can conclude that for rank errors up to half the minimum distance ${\mathrm{d}_{\mathrm{R},i}(\mathbf{r},\mathbf{c}) = \omega_i(\e) < \frac{d}{2}}$, using any rank metric $\mathrm{d}_{\mathrm{R},i}$ with $i \in \{1,2,3,4\}$ of Definition~\ref{def:rank_metrics}, we can solve the decoding problem by finding a solution of the SRP since the number of errors is $\tau = \omega_1(\e) \leq \omega_i(\e) < \frac{d}{2}$ (cf. Lemma~\ref{lem:rank_weights}).
Note that certain Gabidulin codes over finite fields cannot be decoded beyond half the minimum distance in polynomial time (cf.~\cite{raviv2015some}).
Investigating whether this is also true over fields of characteristic zero is beyond the scope of this paper.
The next section summarizes known algorithms to solve SRPs.

\subsection{Solving Shift Register Problems}
\label{subsec:solvingSRP}

SRPs over $\L[x]$ and $\Lset$ are well-studied and have been used for decoding of several algebraic codes, including Reed--Solomon and (finite field) Gabidulin codes.

Two of the most important algorithms to solve these kinds of problems are:
\begin{enumerate}
\item The \emph{Extended Euclidean Algorithm}. 

Since $\Lset$ is a Euclidean domain, it admits a Euclidean algorithm.
It is shown e.g. in \cite{wachter2013decoding} that the Euclidean algorithm over $\FF[x;\cdot^q]$ can be performed in $O(\mathcal{D}(n))$ time, where $\mathcal{D}(n)$ is the complexity of dividing two polynomials in $\FF[x;\cdot^q]$.
These results directly translate to $\Lset$.

Using the classical division algorithm, $\mathcal{D}(n) \in O(n^2)$.
However, it is justifiable that the division method described in \cite{puchinger2015fast} generalizes to $\Lset$ where $\autom(\cdot)$ can be computed in $O(1)$, implying $\mathcal{D}(n) \in O(n^{1.69} \log(n))$.

\item \emph{Module Minimization}.

The algorithms described in \cite{li2015solving} solve a generalized version of the SRP described in this paper.
If $\autom(\cdot)$ can be computed in $O(1)$, the complexity of finding a solution of the SRP becomes
$O(n^2)$.
Moreover, as already mentioned in \cite{li2015solving}, there is the substantiated hope for similar speed-ups as in the $\L[x]$ case, such as the divide-and-conquer variant described in \cite{alekhnovich_linear_2005}.
\end{enumerate}

Alternatively, a variant of the Berlekamp--Massey algorithm (cf.~\cite{richter2004error}) can be used, which might have advantages in practical scenarios.

\subsection{Issues Besides Complexity}

Since we are dealing with infinite fields, we have to deal with some issues that do not appear in the finite field case.

As already mentioned in \cite{robert2015phd}, when computing in exact computation domains, such as number fields, we have to face the problem of coefficient growth.
Fortunately, our proposed decoding method reduces the decoding problem to a problem that was already studied in terms of coefficient growth before (cf. \cite{beckermann2006fraction}).
As described in Section~\ref{subsec:solvingSRP}, we can use module minimization to obtain a solution of the SRP.
More precisely, in \cite{li2015solving} a solution of the SRP is obtained by transforming a basis of a certain $\Lset$-module into a normal form, called \emph{weak Popov form}.
Instead of using the algorithms described in \cite{li2015solving} to obtain a weak Popov form, we can use the methods from \cite{beckermann2006fraction}.
The algorithms in \cite{beckermann2006fraction} are slower than those in \cite{li2015solving}, but have a better control of coefficient growth in intermediate results using fraction-free methods.

On the other hand, especially in the application of LRMR, it might be advantageous in terms of complexity not to use exact but approximate computations.
Thus, one has to deal with numerical issues.
In the Hamming metric analogy, this problem was already investigated for complex Reed--Solomon codes (cf.~\cite[Chapter~7]{zorlein2015}).
There, it turned out that a modification of the Berlekamp--Massey algorithm is the numerically most stable one among the classical approaches for solving an SRP.
It should also be noted that the interpolation algorithm presented in \cite{liu2014kotter} is a reasonable choice to compute $\rhat$, since it is the skew polynomial analogue of the numerically stable Newton interpolation with divided differences.

\subsection{Summary of the Decoding Algorithm}

Algorithm~\ref{alg:decode} summarizes the decoding procedure.

\printalgoIEEE{
%\begin{algorithm}
\DontPrintSemicolon
\KwIn{$\r = \c + \e$}
\KwOut{$f$ such that $\c = (f(g_1),\dots,f(g_n))$\\ or ``decoding failure''.}
Calculate $\rhat$ as in Theorem \ref{thm:interpolation} \; \label{line:dec_1}
Calculate $\Gpoly$ as in Definition \ref{def:elp} \; \label{line:dec_2}
$(\lambda,\omega) \gets $ Solve SRP with input $\rhat$, $\Gpoly$ \; \label{line:dec_3}
$(\Lambda,\Omega) \gets \alpha^{-1} (\lambda,\omega)$ with $\alpha$ as in \eqref{eq:alpha} \; \label{line:dec_4}
$(\quo,\rem) \gets $ Right-divide $\Omega$ by $\Lambda$ \label{line:dec_5} (cf. Lemma~\ref{lem:division}) \;
\If{$\rem=0$}{
\Return{$\quo$}  \label{line:dec_6}}
\Else{
\Return{``decoding failure''}}
\caption{Decode Gabidulin Codes}
\label{alg:decode}
}

\begin{theorem}
Alg.~\ref{alg:decode} is correct and has complexity $O(n^2)$.
\end{theorem}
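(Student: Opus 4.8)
The plan is to verify correctness and complexity by walking through \cref{alg:decode} line by line, leaning on the structural results already established. For correctness, the key fact is \cref{thm:key_equation_and_SRP}: assuming $\terr < \tfrac{d}{2}$, the SRP solved in \cref{line:dec_3} returns a pair $(\lambda,\omega)$ satisfying $(\Lambda,\Lambda f) = \alpha(\lambda,\omega)$ for some $\alpha \in \L^*$. Thus after rescaling by $\alpha^{-1}$ in \cref{line:dec_4}, the algorithm holds $(\Lambda,\Omega)$ with $\Omega = \Lambda f$ exactly. Right-dividing $\Omega$ by $\Lambda$ in \cref{line:dec_5}, \cref{lem:division} guarantees a unique quotient and remainder; since $\Omega = \Lambda f = \quo \cdot \Lambda$ is in fact a left multiple with $\rem = 0$, the uniqueness in \cref{lem:division} forces $\quo = f$ and $\rem = 0$. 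Hence the correct information polynomial $f$ is returned whenever $\terr < \tfrac{d}{2}$. I would also note that when the error is too large, the remainder test in \cref{line:dec_5} acts as a failure detector, which is why the algorithm returns ``decoding failure'' rather than an erroneous codeword in that branch.

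For the complexity bound I would account for the cost of each line in operations over $\L$, invoking the standing assumption that $\autom(\cdot)$ costs $O(1)$. \Cref{line:dec_1} computes $\rhat$ by interpolation (\cref{thm:interpolation}); \cref{line:dec_2} computes $\Gpoly$, which by \cref{thm:existence_annihilator} costs $\BigO{n^2}$ since the $g_i$'s span an $n$-dimensional space. The dominant and decisive step is \cref{line:dec_3}: as surveyed in \cref{subsec:solvingSRP}, the SRP can be solved in $O(n^2)$ using module minimization from \cite{li2015solving} (or the extended Euclidean algorithm with classical division). The rescaling in \cref{line:dec_4} is $O(n)$, and the single right-division in \cref{line:dec_5} costs $O(n^2)$ by the classical division algorithm underlying \cref{lem:division}. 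Summing these contributions, every step is $O(n^2)$ or cheaper, so the total complexity is $O(n^2)$.

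The main obstacle I anticipate is not any single step but rather making the correctness argument airtight at the rescaling and division stage. Specifically, one must be careful that $\alpha$ from \eqref{eq:alpha} is genuinely available and that $\Omega = \Lambda f$ holds as an identity in $\Lset$, not merely modulo $\Gpoly$; this is exactly what the final lines of the proof of \cref{thm:key_equation_and_SRP} establish, so I would cite that result and emphasize that it yields the \emph{exact} equality needed for the remainder to vanish. A secondary subtlety is justifying the $O(n^2)$ cost of solving the SRP at the stated precision: this relies on the results of \cite{li2015solving} translating to $\Lset$ under the $O(1)$-cost assumption on $\autom$, which \cref{subsec:solvingSRP} argues but does not reprove. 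I would therefore treat the complexity claim as a direct consequence of the cited SRP solvers combined with the per-line accounting above, rather than attempting an independent analysis.
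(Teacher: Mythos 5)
Your proposal is correct in substance and follows essentially the same route as the paper: the paper's own proof is just the terse version of yours, citing Theorems~\ref{thm:key_equation} and~\ref{thm:key_equation_and_SRP} for correctness and then tabulating the per-line costs (interpolation via \cite{liu2014kotter}, \cref{alg:annihilator}, module minimization \cite{li2015solving}, and skew-polynomial division), each in $O(n^2)$. One small slip in your more detailed correctness tracing: in the non-commutative ring $\Lset$ the identity $\Lambda f = \quo\cdot\Lambda$ with $\quo=f$ would assert $\Lambda f = f\Lambda$, which is false in general; to recover $f$ from $\Omega=\Lambda\cdot f$ you must perform the division with $\Lambda$ as a \emph{left} divisor, i.e.\ write $\Omega = \Lambda\cdot\quo+\rem$, and then uniqueness of that (left) division gives $\quo=f$ and $\rem=0$. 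This is really a wrinkle inherited from the phrasing of \cref{line:dec_5} versus the convention fixed in \cref{lem:division} (which places the divisor on the right); since $\Lset$ is both a left- and right-Euclidean domain, the needed one-sided division exists and is unique, so the argument goes through once the correct side is named.
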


\begin{proof}
Correctness follows from Theorems~\ref{thm:key_equation} and \ref{thm:key_equation_and_SRP}.
The lines of the algorithm have the following complexities, implying the overall statement:
\begin{itemize}
\item Line~\ref{line:dec_1}: We can use the interpolation algorithm for skew polynomials presented in \cite{liu2014kotter}, having complexity $O(n^2)$.
\item Line~\ref{line:dec_2}: $O(n^2)$ by Algorithm~\ref{alg:annihilator}.
\item Line~\ref{line:dec_3}: $O(n^2)$ using e.g. module minimization as in \cite{li2015solving}.
\item Line~\ref{line:dec_4}: Negligible.
\item Line~\ref{line:dec_5}: $O(n^2)$ using the standard algorithm \cite{ore1933theory}. \qedhere
\end{itemize}
\end{proof}

%-----------------------------------------------------------------------
% Conclusion
%-----------------------------------------------------------------------
\section{Conclusion}
\label{sec:Conclusion}

We have proposed a new method for decoding Gabidulin codes over fields with characteristic zero, reducing the decoding complexity to $O(n^2)$ compared to $O(n^3)$ in \cite{robert2015phd}.
This alternative procedure reduces decoding to a linear shift register synthesis problem, which can be efficiently solved using several known algorithms, each having advantages in terms of speed, coefficient growth or numerical stability.
The presented work can be used for applying Gabidulin codes over characteristic zero to space-time coding and to the low-rank matrix recovery problem.
The latter one is, to the best of our knowledge, a new application for these codes.

\bibliographystyle{IEEEtran}
\bibliography{main}

% Generated by IEEEtran.bst, version: 1.13 (2008/09/30)
\begin{thebibliography}{10}
\providecommand{\url}[1]{#1}
\csname url@samestyle\endcsname
\providecommand{\newblock}{\relax}
\providecommand{\bibinfo}[2]{#2}
\providecommand{\BIBentrySTDinterwordspacing}{\spaceskip=0pt\relax}
\providecommand{\BIBentryALTinterwordstretchfactor}{4}
\providecommand{\BIBentryALTinterwordspacing}{\spaceskip=\fontdimen2\font plus
\BIBentryALTinterwordstretchfactor\fontdimen3\font minus
  \fontdimen4\font\relax}
\providecommand{\BIBforeignlanguage}[2]{{%
\expandafter\ifx\csname l@#1\endcsname\relax
\typeout{** WARNING: IEEEtran.bst: No hyphenation pattern has been}%
\typeout{** loaded for the language `#1'. Using the pattern for}%
\typeout{** the default language instead.}%
\else
\language=\csname l@#1\endcsname
\fi
#2}}
\providecommand{\BIBdecl}{\relax}
\BIBdecl

\bibitem{koetter2008}
R.~Koetter and F.~R. Kschischang, ``{Coding for Errors and Erasures in Random
  Network Coding},'' \emph{IEEE Transactions on Information Theory}, vol.~54,
  no.~8, pp. 3579--3591, 2008.

\bibitem{delsarte1978bilinear}
P.~Delsarte, ``{Bilinear Forms over a Finite Field, with Applications to Coding
  Theory},'' \emph{Journal of Combinatorial Theory, Series A}, vol.~25, no.~3,
  pp. 226--241, 1978.

\bibitem{gabidulin1985theory}
E.~M. Gabidulin, ``{Theory of Codes with Maximum Rank Distance},''
  \emph{Problemy Peredachi Informatsii}, vol.~21, no.~1, pp. 3--16, 1985.

\bibitem{roth1991maximum}
R.~M. Roth, ``{Maximum-Rank Array Codes and their Application to Crisscross
  Error Correction},'' \emph{IEEE Transactions on Information Theory}, vol.~37,
  no.~2, pp. 328--336, 1991.

\bibitem{candes2009}
E.~J. Cand{\`e}s and B.~Recht, ``{Exact Matrix Completion via Convex
  Optimization},'' \emph{Foundations of Computational mathematics}, vol.~9,
  no.~6, pp. 717--772, 2009.

\bibitem{gross2011recovering}
D.~Gross, ``{Recovering Low-Rank Matrices From Few Coefficients in Any
  Basis},'' \emph{IEEE Transactions on Information Theory}, vol.~57, no.~3, pp.
  1548--1566, 2011.

\bibitem{candes2006}
E.~J. Cand{\`e}s, J.~Romberg, and T.~Tao, ``{Robust Uncertainty Principles:
  Exact Signal Reconstruction From Highly Incomplete Frequency Information},''
  \emph{IEEE Transactions on Information Theory}, vol.~52, no.~2, pp. 489--509,
  2006.

\bibitem{donoho2006}
D.~L. Donoho, ``{Compressed Sensing},'' \emph{IEEE Transactions on Information
  Theory}, vol.~52, no.~4, pp. 1289--1306, 2006.

\bibitem{zorlein2015}
H.~A. Z{\"o}rlein, ``{Channel Coding Inspired Contributions to Compressed
  Sensing},'' Ph.D. dissertation, Universit{\"a}t Ulm, 2015.

\bibitem{mohamed2015deterministic}
M.~Mohamed, S.~Rizkalla, H.~A. Z{\"o}rlein, and M.~Bossert, ``{Deterministic
  Compressed Sensing with Power Decoding for Complex Reed-Solomon Codes},'' in
  \emph{International ITG Conference on Systems, Communications and Coding},
  2015, pp. 1--6.

\bibitem{augot2013rank}
D.~Augot, P.~Loidreau, and G.~Robert, ``{Rank Metric and Gabidulin Codes in
  Characteristic Zero},'' in \emph{ISIT 2013 IEEE International Symposium on
  Information Theory}, 2013.

\bibitem{robert2015phd}
G.~Robert, ``{Codes de Gabidulin en Caract\'{e}ristique Nulle. Application au
  Codage Espace-Temps},'' Ph.D. dissertation, Universit\'{e} Rennes 1, 2015.

\bibitem{loidreau2006}
P.~Loidreau, ``{A Welch--Berlekamp Like Algorithm for Decoding Gabidulin
  Codes},'' in \emph{Coding and Cryptography}.\hskip 1em plus 0.5em minus
  0.4em\relax Springer, 2006, pp. 36--45.

\bibitem{ore1933special}
O.~Ore, ``{On a Special Class of Polynomials},'' \emph{Transactions of the
  American Mathematical Society}, vol.~35, no.~3, pp. 559--584, 1933.

\bibitem{ore1933theory}
------, ``{Theory of Non-Commutative Polynomials},'' \emph{Annals of
  mathematics}, pp. 480--508, 1933.

\bibitem{boucher2014linear}
D.~Boucher and F.~Ulmer, ``{Linear Codes using Skew Polynomials with
  Automorphisms and Derivations},'' \emph{Designs, codes and cryptography},
  vol.~70, no.~3, pp. 405--431, 2014.

\bibitem{wachter2013decoding}
A.~Wachter-Zeh, ``{Decoding of Block and Convolutional Codes in Rank Metric},''
  Ph.D. dissertation, Universit{\'e} Rennes 1; Ulm University, 2013.

\bibitem{fitzpatrick1995key}
P.~Fitzpatrick, ``{On the Key Equation},'' \emph{IEEE Transactions on
  Information Theory}, vol.~41, no.~5, pp. 1290--1302, 1995.

\bibitem{raviv2015some}
N.~Raviv and A.~Wachter-Zeh, ``{Some Gabidulin Codes Cannot be List Decoded
  Efficiently at any Radius},'' in \emph{IEEE International Symposium on
  Information Theory (ISIT)}, June 2015, pp. 6--10.

\bibitem{puchinger2015fast}
S.~Puchinger and A.~Wachter-Zeh, ``{Fast Operations on Linearized Polynomials
  and their Applications in Coding Theory},'' \emph{arXiv preprint
  \url{http://arxiv.org/abs/1512.06520}}, Dec. 2015.

\bibitem{li2015solving}
W.~Li, J.~S. Nielsen, S.~Puchinger, and V.~Sidorenko, ``{Solving Shift Register
  Problems over Skew Polynomial Rings using Module Minimisation},'' in
  \emph{International Workshop on Coding and Cryptography, arXiv:
  \url{http://arxiv.org/abs/1501.04797}}, Apr. 2015.

\bibitem{alekhnovich_linear_2005}
M.~Alekhnovich, ``{Linear {D}iophantine Equations Over Polynomials and Soft
  Decoding of {R}eed–{S}olomon Codes},'' \emph{IEEE Transactions on
  Information Theory}, vol.~51, no.~7, pp. 2257--2265, Jul. 2005.

\bibitem{richter2004error}
G.~Richter and S.~Plass, ``{Error and Erasure Decoding of Rank-Codes with a
  Modified Berlekamp-Massey Algorithm},'' \emph{ITG FACHBERICHT}, pp. 203--210,
  2004.

\bibitem{beckermann2006fraction}
B.~Beckermann, H.~Cheng, and G.~Labahn, ``{Fraction-Free Row Reduction of
  Matrices of {O}re Polynomials},'' \emph{Journal of Symbolic Computation},
  vol.~41, no.~5, pp. 513--543, 2006.

\bibitem{liu2014kotter}
S.~Liu, F.~Manganiello, and F.~R. Kschischang, ``K{\"o}tter interpolation in
  skew polynomial rings,'' \emph{Designs, codes and cryptography}, vol.~72,
  no.~3, pp. 593--608, 2014.

\end{thebibliography}

\end{document}